  \providecommand\BibTeX{{%
    \normalfont B\kern-0.5em{\scshape i\kern-0.25em b}\kern-0.8em\TeX}}}
\def\BibTeX{{\rm B\kern-.05em{\sc i\kern-.025em b}\kern-.08em
    T\kern-.1667em\lower.7ex\hbox{E}\kern-.125emX}}
\newcommand{\remarkInternal}[4]{\ifthenelse{\boolean{todo}}{\todo[inline, color=#2, caption={2do}, #3]{\begin{minipage}{\textwidth-4pt}\emph{Remark #1:}\\#4\end{minipage}}}{}}
\DeclareMathOperator*{\argmin}{arg\,min}
\newcommand{\innermid}{\;\middle\lvert\;}
\newtheorem{theorem}{Theorem}
\newtheorem{proposition}{Proposition}
\begin{document}

\title{Learning Mean-Field Control for Delayed Information Load Balancing in Large Queuing Systems}

\author{Anam Tahir}
\authornote{These authors contributed equally to this research.}
\author{Kai Cui}
\authornotemark[1]
\author{Heinz Koeppl}
\email{{anam.tahir, kai.cui, heinz.koeppl}@tu-darmstadt.de}
\affiliation{%
  \institution{Department of Electrical Engineering and Information Technology}
  \city{Technische Universität Darmstadt}
  \country{Germany}
}



\begin{abstract}

Recent years have seen a great increase in the capacity and parallel processing power of data centers and cloud services. To fully utilize the said distributed systems, optimal load balancing for parallel queuing architectures must be realized. Existing state-of-the-art solutions fail to consider the effect of communication delays on the behaviour of very large systems with many clients. In this work, we consider a multi-agent load balancing system, with delayed information, consisting of many clients (load balancers) and many parallel queues. In order to obtain a tractable solution, we model this system as a mean-field control problem with enlarged state-action space in discrete time through exact discretization. Subsequently, we apply policy gradient reinforcement learning algorithms to find an optimal load balancing solution. Here, the discrete-time system model incorporates a synchronization delay under which the queue state information is synchronously broadcasted and updated at all clients. We then provide theoretical performance guarantees for our methodology in large systems. Finally, using experiments, we prove that our approach is not only scalable but also shows good performance when compared to the state-of-the-art power-of-d variant of the Join-the-Shortest-Queue (JSQ) and other policies in the presence of synchronization delays.
\end{abstract}



\begin{CCSXML}
<ccs2012>
   <concept>
       <concept_id>10010147.10010257.10010258.10010261.10010275</concept_id>
       <concept_desc>Computing methodologies~Multi-agent reinforcement learning</concept_desc>
       <concept_significance>500</concept_significance>
       </concept>
   <concept>
       <concept_id>10003033.10003068.10003073.10003074</concept_id>
       <concept_desc>Networks~Network resources allocation</concept_desc>
       <concept_significance>500</concept_significance>
       </concept>
 </ccs2012>
\end{CCSXML}

\ccsdesc[500]{Computing methodologies~Multi-agent reinforcement learning}
\ccsdesc[500]{Networks~Network resources allocation}

\keywords{load balancing; parallel systems; mean-field control; reinforcement learning}

\maketitle

\section{Introduction}
\label{sec: related work}

Load balancing in large queuing systems has been of great interest in the field of parallel processing and has yielded many successful distributed algorithms such as Join-the-Shortest-Queue (JSQ), Shortest-Expected-Delay (SED) \cite{winston1977optimality, selen2016steady, whitt1986deciding} and many others, see also \cite{van2018scalable} for a recent review. JSQ and SED have been designed for asynchronous systems with a central dispatcher (agent / client) assigning jobs (packets) to $M$ parallel servers (queues) under the assumption that the dispatcher can obtain instantaneous, accurate and synchronized information of the queue lengths at all times. In practice, both instant information and centralized dispatching are not realistic, especially if the number of queues $M \gg 1$ is large.


\begin{figure}
\center
\includegraphics[width=0.85\linewidth]{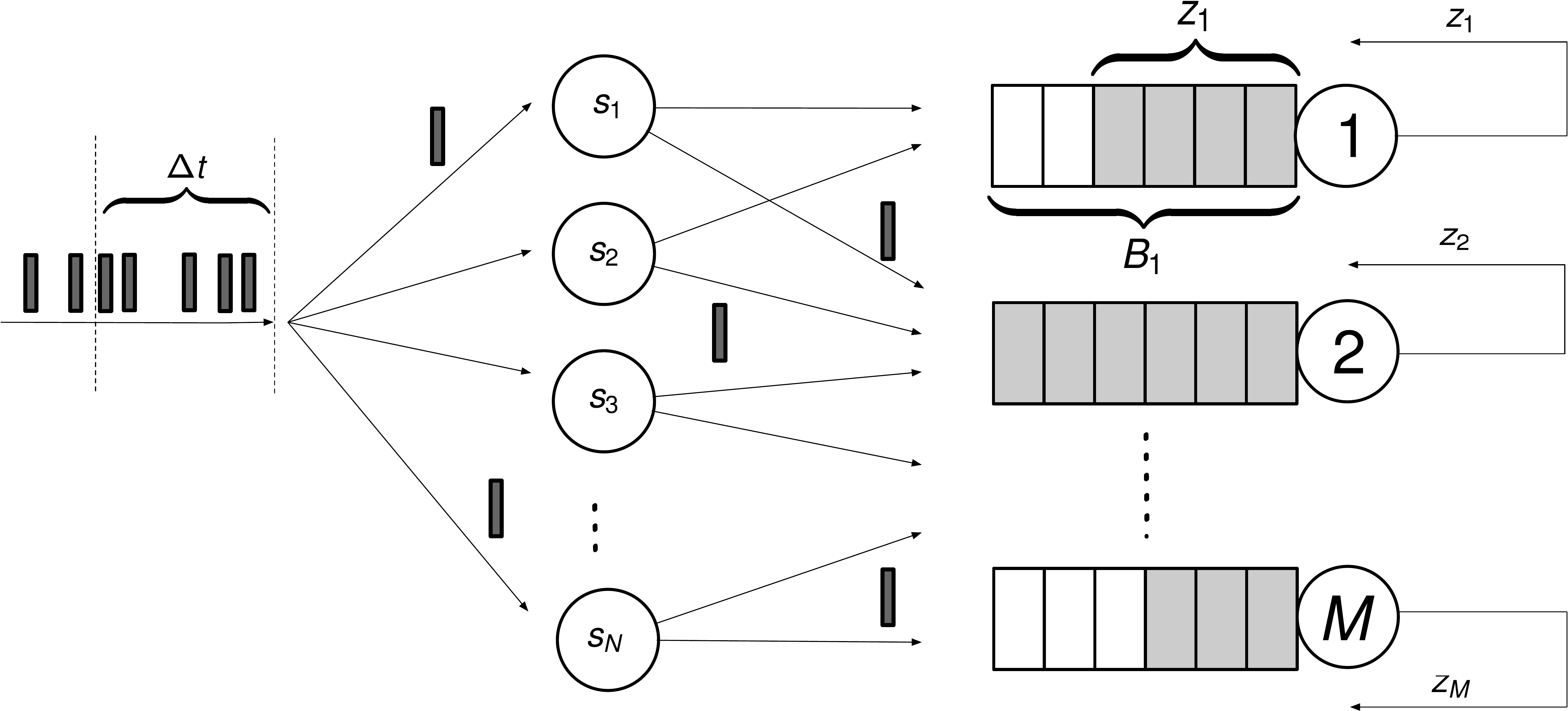}
    \caption{Our system model consists of $N$ clients and $M$ parallel servers. Jobs arriving in a certain time interval $\Delta t$ are assigned to the clients, which consequently assign them to one of a few sampled servers based on some policy. Arrows from each client indicate the $d=2$ servers randomly sampled by each client at the current epoch.} 
 \label{fig:system model}
\end{figure}


To remedy this scalability issue, the power-of-$d$ versions JSQ($d$) and SED($d$) of JSQ and SED \cite{mitzenmacher2001power} let the dispatcher sample only $d \le M$ out of $M$ servers randomly and then allocate the job to the sampled server with shortest expected processing time. However, JSQ($d$) and SED($d$) nonetheless assume instant and accurate information of the state of those $d$ servers, which remains unrealistic due to both the distributed nature of the system and computational overheads introducing latency. The problem is only exacerbated in a multiple client scenario where all clients access simultaneously. Hence, to model a more realistic system, it is of importance to take communication delays $\Delta t$ into account. In \cite{mitzenmacher2000useful}, it was shown that JSQ fails when $\Delta t > 0$ mainly due to a phenomenon known as ‘herd behaviour’: Multiple clients assigning jobs at the same time would consider the same subset of servers with few jobs, and thus all clients will end up assigning to the same servers. This eventually leads to higher response times and, in the case of finite queues, job drops. Though JSQ($d$) ameliorates this issue somewhat since it is highly unlikely for small $d$ and large $M$ that many clients will randomly choose the same servers, the technique nonetheless remains suboptimal under delayed information. Indeed, as $\Delta t \to \infty$, a completely random allocation to one of the servers becomes optimal  \cite{mitzenmacher2001power}. However, when the delay $\Delta t$ lies between $0$ and $\infty$, the optimal policy must lie in-between, which will be the main focus of this work.

In this paper, we shall consider a multi-agent system of $N$ clients and $M$ servers with $N \gg M \gg 1$ and communication delay. For scalability, each client samples $d$ of the $M$ servers uniformly at random using the power-of-$d$ method. The discretized system can be understood as a delayed periodic or synchronously updating system. Most importantly, as a result of delayed information, the number of agents will make a difference as opposed to the delay-free case, since each agent may see a different subset of information. In order to scale to a great number of clients and servers, we will apply mean-field theory, analogous to fluid limits $M \to \infty$, that is used to tractably model and assess systems with many queues. Fluid limits were used to study the performance of scheduling algorithms like JSQ and JSQ(d) in terms of sojourn time and average queue length \cite{mitzenmacher2001power, mukherjee2018universality, dawson2005balancing}. However, models including delayed information still remain an open problem \cite{lipshutz2019open}, in particular in the presence of many clients. One work with similar system model and synchronization delays is given in \cite{zhou2021asymptotically}, though they instead consider finitely many servers with infinite buffer sizes where the multiple clients use their local, asynchronous estimates of queue lengths to perform scheduling. 
This idea of using local client memory has also been proposed in \cite{van2019hyper, anselmi2020power}, however only for a single client.

More generally, the same tractability issue for large systems has led to the increasing popularity of general (competitive) mean-field games (MFG) \cite{huang2006large, lasry2007mean, saldi2018markov} and their cooperative counterpart of mean-field control (MFC) \cite{andersson2011maximum, bensoussan2013mean, arabneydi2014team, djete2019mckean, cui2021discrete}, wherein a system with large numbers of interchangeable and indistinguishable agents is converted into a system where one representative agent is interacting with the distribution (mean-field) of other agents. Here, there has been great recent focus on learning-based solution algorithms for MFGs \cite{guo2019learning, subramanian2019reinforcement, cui2021approximately, aggarwal2021machine} and MFC \cite{carmona2019model, gu2020mean, mondal2021approximation}. We will similarly apply the enlarged state-action space technique for MFCs (see e.g. \cite{gu2020mean}), its associated dynamic programming principle as well as reinforcement learning in order to find optimal load balancing policies for otherwise intractably large system. While reinforcement learning (RL) \cite{sutton2018reinforcement}, so-far has found great success e.g. in games \cite{mnih2015human, brown2019superhuman}, robotics \cite{kober2013reinforcement} or communication and queuing networks \cite{luong2019applications, aggarwal2021machine}, in the case of multiple agents, there still remain many challenges in multi-agent reinforcement learning (MARL) such as intractability for large numbers of agents \cite{zhang2021multi}. RL itself has long since been used in numerous works -- though not in the context of mean-field control -- to find an optimal load balancing policy. For examples, see \cite{winston1977optimality, stidham1993survey, krishnan1987joining, li2019overview} and references therein. The combination with mean-field control allows for tractable solution of very large load balancing systems and shall be the subject of our studies. We will similarly formulate a synchronous system model with delay by assuming $N \gg M \to \infty$, which will allow us to apply reinforcement learning to the otherwise difficult to solve optimal load balancing problem. Although our model shares similarities in concept to MFC, it does not immediately fit into the framework of conventional MFC, as we not only derive the discrete-time mean-field model starting from an underlying continuous-time dynamic, but at the same time take a double limit of infinitely many queues and agents. While, existing MFC frameworks typically focus only on the limit of infinitely many agents without external dynamics of non-agent-bound (queue) states. 

To summarize our contributions, (i) we consider a model not only with synchronous communication delay, but also under the limit of both many clients and many servers, stepping towards a general solution for the outstanding problem of scalable load balancing under delayed information \cite{zhou2021asymptotically}; (ii) we formulate the system as a mean-field control problem, introducing a decision hierarchy to obtain a standard Markov decision process amenable to standard solution techniques; (iii) we theoretically show the well-motivatedness of our limiting model by proving that the limiting system performance is reached with arbitrary precision in sufficiently large systems; and (iv) we apply reinforcement learning to solve the otherwise difficult-to-solve Markov decision process with continuous, high-dimensional action space, at a complexity independent of the number of clients $N$ and servers $M$. We find that, as the synchronization delay increases to an intermediate value, the choice of the shortest queues or fully random assignment becomes suboptimal and is outperformed by a learned policy. This policy can either be learned offline for a given system with known parameters, or applied online to learn optimal assignments in live systems. Our claims are supported both theoretically and experimentally and ablated for the case where our formal assumption $N \gg M$ is violated, giving us a good solution for large-scale load balancing systems with many clients and servers.

\section{Load Balancing With Delay} 
\label{sec:system model}
In this section, we will introduce the problem setting that will motivate our formulation. An overview of the considered load balancing system is given in Figure~\ref{fig:system model}.

We consider $N$ clients and $M$ servers, where each server has its own queue with limited buffer capacity. Jobs arrive randomly according to a Markov modulated Poisson process -- modelling e.g. changing load factors throughout a day -- with rate $\lambda_t M$ and are divided uniformly among clients, which will allocate the jobs to servers for processing. In accordance with the power-of-$d$ technique, clients shall randomly select $d$ out of $M$ queues and -- according to some policy to be optimized -- send their jobs to a selection of these $d$ queues, where $d \ll M$. On the queuing side of our system model, we have $M$ parallel and homogeneous servers in the system with service rates $\alpha$. The queues are finite with a maximum buffer capacity $B$ and the jobs in the queues are served in a first-in-first-out (FIFO) manner. Each server sends back its queue filling status, which is then used by the clients to make their decision for the next incoming jobs. The number of jobs that are currently in each queue together make up the state of the environment. Our goal is to \emph{minimize overall job drops} under decentralized decision-making by each client, e.g. like in edge computing scenarios.

We will assume that our system operates synchronously and broadcasts updates of sampled queue states to dispatchers only once every fixed time interval. Thus, in the following we will model our system at discrete decision epochs $\{0, \Delta t, 2 \cdot \Delta t, \ldots\}$ for some synchronization delay $\Delta t > 0$, after each of which the clients will sample $d$ new queues and keep this selection of $d$ queues for the entire duration of that decision epoch. Not only will this allow us to incorporate communication delays, but it will also lead to significantly less sampling of server states by the clients, as each client is only required to sample $d$ servers in every decision epoch. Another advantage of this approach is that the resulting discretized Markov decision process will allow us to apply powerful and well-established reinforcement learning algorithms, which to this date have been extensively developed for discrete-time models.





\subsection{Mathematical model}
\textbf{Notation.} \textit{Let $\mathcal S$ be a discrete space equipped with the discrete topology. Define by $\mathcal P(\mathcal S)$ the space of (Borel) probability measures on $\mathcal S$, equipped with the $l_1$-norm $\lVert \mu - \nu \rVert_1 = \sum_{s \in \mathcal S} \left| \mu(s) - \nu(s) \right|$. To keep notation simple, we denote the probability mass function of $\nu \in \mathcal P(S)$ by $\nu(\cdot)$. In the following, we denote random variables of the finite system with superscript $N,M$, of the infinite-agent version with superscript $M$ and of the limiting mean-field system without superscript.}

Formally, the $N$-agent $M$-queue system could be considered a multi-agent Markov Decision Process (MMDP) for $N, M \in \mathbb N$, i.e. the cooperative and fully observable case. See e.g. \cite{oliehoek2016concise} for a review of possible multi-agent problem formulations. In principle, one could even consider competitive or partially observed cases. However, the resulting limiting mean-field systems will be significantly more complex and thus remain outside of our scope. Instead, we will in the following consider a decentralized control setting where agents, due to the symmetry of our model, shall act depending on the current distribution of queue states.

Define $\mathcal Z \coloneqq \{0, \ldots, B\}$ as the finite queue state space, i.e. each server can contain at most $B$ jobs in its queue. The agent state space shall be denoted as $\mathcal X \coloneqq \{1, \ldots, M\}^d$, i.e. a selection of $d$ random queues. Although we could disallow repeated queue selections, it will make no difference in sufficiently large systems and adds unnecessary notational complexity. Finally, each agent can choose as an action its choice of one of $d$ randomly sampled accessible queues, i.e. the action space is defined as the $d$ possible queue choices $\mathcal U \coloneqq \{1, \ldots, d\}$. At any decision epoch $t = 0, 1, \ldots$, the states and actions of agents $i = 1, \ldots, N$, are random variables denoted by $x_t^{N,M,i} \equiv (x_{t,1}^{N,M,i}, \ldots, x_{t,d}^{N,M,i}) \in \mathcal X$ and $u_t^{N,M,i} \in \mathcal U$, and similarly the state of each queue $j=1, \ldots, M$ is denoted by $z_t^{N,M,j} \in \mathcal Z$ with $z_0^{N,M,j} \sim \nu_0 \in \mathcal P(\mathcal Z)$ from some initial distribution $\nu_0$. Additionally, $\lambda^{N,M}_t > 0$ -- the arrival rate parameter -- will be modulated as an independent discrete-time Markov chain with state space $\Lambda$, i.e.
\begin{align}
    \lambda^{N,M}_{t+1} \sim P_\lambda(\lambda^{N,M}_t)
\end{align}
for some arbitrary transition kernel $P_\lambda$. 

Due to symmetry of the problem, for sufficiently many agents, the information about each specific queue's state becomes irrelevant to the problem. Thus, we assume some common, shared policy of the form $\pi_t \colon \mathcal P(\mathcal Z) \times \mathcal Z^d \times \Lambda \to \mathcal P(\mathcal U)$ for all agents, acting on the current $\mathcal P(\mathcal Z)$-valued random empirical queue state distribution
\begin{align} \label{eq:empirical}
    \mathbb H_t^{N,M} \coloneqq \frac{1}{M} \sum_{j=1}^M \delta_{z_t^{N,M,j}}
\end{align}
with Dirac measure $\delta$, the sampled queue states, and the current arrival rate. In practice, we may also drop dependence on the current arrival rate and empirical distribution, or estimate e.g. the empirical queue state distribution by sampling a subset of random queues, though both will complicate the theoretical analysis of the limiting MFC problem, as it would not be possible to formulate the limiting system as a standard, fully-observed Markov decision process. 

The dynamics for each agent $i$ are thus given by
\begin{align} \label{eq:xsample}
    x_t^{N,M,i} &\sim \otimes_{k=1}^d \mathrm{Unif}(\{1,\ldots,M\}), \\
    u_t^{N,M,i} &\sim \pi_t \left( \mathbb H_t^{N,M}, (z_t^{N,M,x_{t,1}^{i}}, \ldots, z_t^{N,M,x_{t,d}^{i}}), \lambda^{N,M}_t \right), \label{esq:usample}
\end{align}
i.e. at each decision epoch, the agents decide to which of their $d$ randomly sampled, accessible queues they decide to send their jobs to. For simplicity of exposition, this choice of destination is deterministic, though in our experiments we shall allow randomization for each packet. As a result, starting with $z_0^{N,M,j} \sim \nu_0 \in \mathcal P(\mathcal Z)$  for each queue $j$ and some initial queue state distribution $\nu_0$, for any queue $j$, the next queue state $z_{t+1}^{N,M,j}$ is obtained from the previous state $z_{t}^{N,M,j}$ by simulating a $\mathcal Z$-valued continuous-time Markov chain for $\Delta t$ time units, beginning with $z_{t}^{N,M,j}$ and decrementing or incrementing by $1$ at departure rate $\alpha > 0$ and arrival rate 
\begin{align} \label{eq:arrival}
    \lambda^{N,M,j}_t = M \lambda^{N,M}_t \cdot \frac 1 N \sum_{i=1}^N \sum_{k=1}^d \mathbf 1_{x_{t,k}^{N,M,i} = j} \mathbf 1_{{u_{t}^{N,M,i} = k}}
\end{align}  
respectively, ignoring jumps above $B$ or below $0$. Any arrivals beyond $B$ are counted in the average number of dropped packets 
\begin{align}
    D^{N,M}_t = \frac{1}{M} \sum_{j=1}^M D^{N,M,j}_t
\end{align}
per queue $j$ during each decision epoch $t$, which will constitute our objective through the discounted infinite-horizon objective
\begin{align}
    J^{N,M}(\pi) = \mathbb E \left[ - \sum_{t=0}^{\infty} \gamma^t D^{N,M}_t \right]
\end{align}
to be maximized with discount factor $\gamma \in (0,1)$. 

Note that we can rewrite \eqref{eq:arrival} as
\begin{align} \label{eq:arrivalempnm}
    \lambda^{N,M,j}_t = M \lambda^{N,M}_t \int_{\mathcal X \times \mathcal U} \sum_{k=1}^d \mathbf 1_{x_k = j} \mathbf 1_{u = k} \, \mathbb G_t^{N,M}(\mathrm dx, \mathrm du)
\end{align}
with the $\mathcal P(\mathcal X \times \mathcal U)$-valued empirical agent state-action distribution
\begin{align}
    \mathbb G_t^{N,M} \coloneqq \frac{1}{N} \sum_{i=1}^N \delta_{x_{t}^{N,M,i}, u_{t}^{N,M,i}} .
\end{align}
Intuitively speaking, when $N \gg M \gg 1$, this empirical distribution becomes deterministic and we need not track each queue state, but only their distribution. Similarly, only the overall distribution of all agent choices will matter, leading to the prospective limiting mean-field model derived in the sequel.

\subsection{Infinite-agent limit}
In the infinite-agent limit where $N \to \infty$, we obtain a limiting control problem with random external states (queue states). Consider the evolution of the $\mathcal P(\mathcal Z)$-valued empirical queue state distribution
\begin{align} \label{eq:empirical-M}
    \mathbb H_t^{M} \coloneqq \frac{1}{M} \sum_{j=1}^M \delta_{z_t^{M,j}}
\end{align}
as $N \to \infty$. Conditional on the queue states and arrival rate, $(x_{t}^{M,i}, u_{t}^{M,i})_{i=1,\ldots,N}$ are i.i.d. Therefore, it will be sufficient to consider only the statistics of a representative agent. By the law of large numbers, we obtain the deterministic agent state distribution
\begin{align}
    \tilde \mu_t \coloneqq \otimes_{k=1}^d \mathrm{Unif}(\{1,\ldots,M\}) \in \mathcal P(\mathcal X)
\end{align}
of agents by \eqref{eq:xsample}. The $\mathcal P(\mathcal X \times \mathcal U)$-valued agent state distribution
\begin{align}
    \mathbb G_t^M \coloneqq \mathcal G_t^M(\tilde \mu_t, h_t)
\end{align}
thus depends on $h_t \coloneqq \pi_t(\mathbb H_t^{M}, \cdot, \lambda^{M}_t)$, where we define 
\begin{align}
    \mathcal G_t^M(\tilde \mu, h)(x,u) \coloneqq \tilde \mu(x) h(u \mid (z_t^{M,x_1}, \ldots, z_t^{M,x_d})).
\end{align}

We observe that this state-action distribution is sufficient for characterizing system behaviour: Conditional on fixed $\lambda^{M}_t$ and $\{z_t^{M,1}, \ldots, z_t^{M,M}\}$, the arrival rate in \eqref{eq:arrival} becomes
\begin{align} \label{eq:arrivalexp}
    \lambda^{M,j}_t &= M \lambda^{M}_t \mathbb E \left[ \sum_{k=1}^d \mathbf 1_{x_{t,k}^{M,1} = j} \mathbf 1_{u_{t}^{M,1} = k} \right] \\
    &= M \lambda^{M}_t \int_{\mathcal X \times \mathcal U} \sum_{k=1}^d \mathbf 1_{x_k = j} \mathbf 1_{u = k} \, \mathbb G_t^{M}(\mathrm dx, \mathrm du)
\end{align}
by the law of large numbers, similar to \eqref{eq:arrivalempnm}. In other words, the empirical agent state-action distribution $\mathbb G_t^{N,M}$ is replaced by the limiting distribution $\mathbb G_t^{M}$.
   
\subsection{Infinite-queue limit}
\label{subsec:3c}
Finally, we derive the mean-field model in the limit as $M \to \infty$, i.e. formally $N \gg M \gg 1$. The random queue states are now replaced by the queue state distribution denoted by $\nu_t \in \mathcal P(\mathcal Z)$. Therefore, each agent state $x_t^i \in \mathcal X$ is now also replaced by the anonymous queue state $\bar z_t^i \in \mathcal Z^d$ instead of the actual queue index. The queue state distribution deterministically induces the agent state distribution 
\begin{align} \label{eq:prodnu}
    \mu_t \coloneqq \otimes_{k=1}^d \nu_t \in \mathcal P(\mathcal Z^d)
\end{align}
by assigning the $d$-dimensional product measure $\mu_t(\bar z) = \Pi_{k=1}^d \nu_t(\bar z_k)$ for any $\bar z \equiv (\bar z_1, \ldots, \bar z_d) \in \mathcal Z^d$. For any decision rule $h_t = \pi_t(\nu_t, \cdot, \lambda_t)$, this agent state distribution induces a state-action distribution
\begin{align}
    \mathbb G_t \coloneqq \mu_t \otimes h_t \in \mathcal P(\mathcal Z^d \times \mathcal U).
\end{align}

Now consider the random amount of arriving packets $P \sim \mathrm{Pois}(M \lambda_t \Delta t)$ in a time slot $\Delta t$. Since $N \gg M$ implies $N \gg P$, the probability of any single agent receiving more than one packet is negligible. This implies that almost all packets' destination queues will be i.i.d. random variables. As a result, since packets arrive with rate $M \lambda_t$ and i.i.d. destinations, for any $z \in \mathcal Z$, packets will equivalently arrive with rate $M \lambda'_t(z)$ in queues with state $z \in \mathcal Z$ by Poisson thinning, where
\begin{align}
\lambda'_t(z) = \lambda_t \int_{\mathcal Z^d \times \mathcal U} \mathbf 1_{\bar z_u = z} \, \mathbb G_t(\mathrm d\bar z, \mathrm du) .
\end{align}

By symmetry, these packets arrive uniformly at random in any arbitrary specific fixed queue in state $z$. For any specific queue with state $z$, the probability of assigning such a packet to that queue is therefore $\frac{1}{M \nu_t(z)}$, which results in an equivalent queue packet arrival rate of
\begin{align} \label{eq:finalrate}
    \lambda_t(z) \coloneqq \frac{M \lambda'_t(z)}{M \nu_t(z)} = \frac{\lambda'_t(z)}{\nu_t(z)} .
\end{align}

The informal derivation until now will be motivated more rigorously in Section~\ref{sec:theo} and numerically in Section~\ref{sec:experiments}.

\subsection{Exact discretization}
\label{subsec:exact_discretization}
The final step is to formulate a discrete-time optimal control problem from the delayed, synchronous system that allows for application of standard optimal control techniques such as reinforcement learning. To discretize the mean-field system exactly at times $\{0, \Delta t, 2 \cdot \Delta t, \ldots\}$, we generate the master equations for the evolution of a single queue's state over time between each of the discretization time points. The procedure is done analogously for the pre-limit systems. Consider a queue in state $z \in \mathcal Z$ at the beginning of a decision epoch $t$. Then, for any $h_t$, we define a $\mathcal Z$-valued continuous-time Markov chain $y$ through $y(0) = z$ and formulate its Kolmogorov forward equations
\begin{align}
    \dot {\mathbf P}^z = \mathbf Q^z \mathbf P^z, \quad \mathbf P^z(0) = \mathbf e_z
\end{align}
for the vector of queue state probabilities $\mathbf P^z(\tau) \in [0,1]^{\mathcal Z}$ at times $\tau \in [0, \Delta t]$ with
\begin{align}
    P_{z'}^z(\tau) \equiv \mathbb P(y(\tau) = {z'}), \quad \forall {z'} \in \mathcal Z
\end{align}
and the transposed transition rate matrix $\mathbf Q^z \coloneqq \mathbf Q(\nu_t, z) \in \mathbb R^{\mathcal Z \times \mathcal Z}$ where $\mathbf Q(\nu, z)$ is defined by
\begin{align}
    \mathbf Q(\nu, z)_{i,i-1} = \lambda_t(\nu, z) \coloneqq \frac{1}{\nu(z)}\lambda_t \int_{\mathcal Z^d \times \mathcal U} \mathbf 1_{\bar z_u = z} \, (\nu \otimes h_t)(\mathrm d\bar z, \mathrm du)
\end{align}
in accordance with \eqref{eq:prodnu} - \eqref{eq:finalrate}, $\mathbf Q(\lambda, z)_{i-1,i} = \alpha(z)$ for $i=1,\ldots,B$, $\mathbf Q(\lambda, z)_{i,i} = -\sum_j \mathbf Q(\lambda, z)_{j,i}$ for $i=0,\ldots,B$, and zero otherwise. Here, $\mathbf e_z$ denotes the $z$-unit vector.

Therefore, from the fraction $\nu_t(z)$ of queues in state $z \in \mathcal Z$ at time $t$, we will deterministically have the resulting fraction
\begin{align}
    \nu_{z,{z'}} = \nu_t(z) P^z_{z'}(\Delta t)
\end{align} 
of queues with state $z \in \mathcal Z$ in resulting state ${z'} \in \mathcal Z$ at the end of the decision epoch $\Delta t$. In total, we therefore have
\begin{align} 
    \nu_{t+1}({z'}) = \sum_{{z} \in \mathcal Z} \nu_{z,{z'}} = \sum_{{z} \in \mathcal Z} \nu_t({z}) P^{z}_{z'}(\Delta t), \quad \forall {z'} \in \mathcal Z . \label{eq:nuupdate}
 \end{align}

Computing the expected packet drops $D^z_t$ per queue with state $z \in \mathcal Z$ is done analogously by
\begin{align}
    \dot D^z_t = \lambda_t(z) P^z_B, \quad D^z_t(0) = 0
\end{align}
resulting in a per-queue average packet loss of
\begin{align} \label{eq:lossperq}
    D_t = \sum_{z \in \mathcal Z} \nu_t(z) D^z_t(\Delta t) .
\end{align}

For exact computation of the terms in \eqref{eq:nuupdate} - \eqref{eq:lossperq}, observe that we have the linear matrix differential equation
\begin{align}
    \begin{bmatrix}
        \dot {\mathbf P}^z \\
        \dot D^z_t
    \end{bmatrix}
    = 
    \underbrace{\begin{bmatrix}
        \mathbf Q^z & 0 \\
        \lambda_t(\nu_t, z) \cdot \mathbf e_B^T & 0
    \end{bmatrix}}_{\bar{\mathbf Q}^z \equiv \bar{\mathbf Q}(\nu_t, z)}
    \cdot 
    \begin{bmatrix}
        \mathbf P^z \\
        D^z_t
    \end{bmatrix}
\end{align}
where we define the extended rate matrices $\bar{\mathbf Q}(\nu_t, z)$ analogously to $\mathbf Q(\nu_t, z)$, and thus obtain exact discretization by
\begin{align}
    \begin{bmatrix}
        {\mathbf P}^z(\Delta t)  \\
        D^z_t(\Delta t)
    \end{bmatrix}
    =
    \exp{(\bar{\mathbf Q} \Delta t)} 
    \cdot 
    \begin{bmatrix}
        \mathbf e_z  \\
        0
    \end{bmatrix}
\end{align}
where $\exp(\cdot)$ denotes the matrix exponential.

\subsection{Upper-level decision process}
We can now obtain a Markov decision process (MDP) \cite{puterman2014markov} with state space $\mathcal P(\mathcal Z) \times \Lambda$ and action space $\mathcal H \coloneqq \{h \colon \mathcal Z^d \to \mathcal P(\mathcal U)\}$, since we have states $(\lambda_t, \nu_t)$ and actions $h_t$ following dynamics
\begin{align}
    (\lambda_{t+1}, \nu_{t+1}) &\sim P_\lambda(\lambda_t) \otimes \delta_{T_\nu(\nu_t, \lambda_t, h_t)} \\
    h_t &= \tilde \pi_t(\nu_t, \lambda_t)
\end{align}
where the transition function $T_\nu$ deterministically maps to $\nu_{t+1}$ according to \eqref{eq:nuupdate}, and the actions are given by a deterministic `upper-level' policy $\tilde \pi = \{ \tilde \pi_t \}_{t \geq 0}$, where $\tilde \pi_t \colon \mathcal P(\mathcal Z) \times \Lambda \to \mathcal H$. Here, the randomness of the system stems from the random packet arrival rate $\lambda_t$. Finally, by \eqref{eq:lossperq}, the objective becomes
\begin{align}
    J(\tilde \pi) = \mathbb E \left[ - \sum_{t=0}^{\infty} \gamma^t D_t \right] .
\label{eq:objective_finite}
\end{align}


\begin{figure}
    \center
    \includegraphics[width=0.8\linewidth]{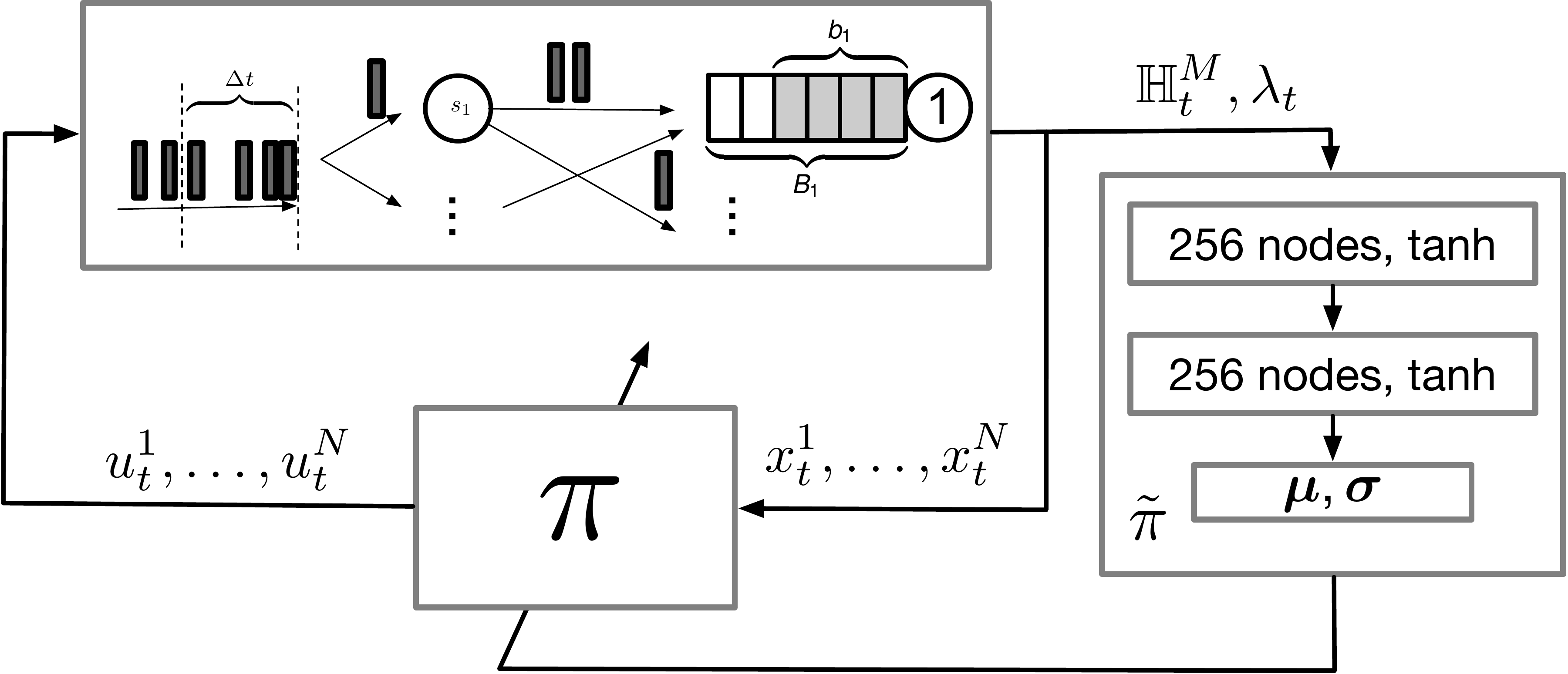}
    \caption{A schematic overview of the application of the upper-level mean-field control policy to the finite-client finite-server system. The upper-level policy $\tilde \pi$ returns a lower-level policy $\pi$ for a given distribution of server states $\mathbb H_t^M$ and current arrival rate $\lambda_t$. The lower-level policy is then applied separately to each agent state $x_t^i$ to obtain an action $u_t^i$.}
    \label{fig:mftonm}
\end{figure}


The application of $\tilde \pi$ to the $N$-agent, $M$-queue case is visualized in Figure~\ref{fig:mftonm}, i.e. each of the agents $i=1,\ldots,N$ first computes the decision rule $h_t = \tilde \pi_t(\mathbb H_t^M, \lambda_t)$ according to the upper-level policy, and then samples its action $u_t^i \sim h_t(x_t^i)$.

For the obtained MDPs, since the expected cost function and the dynamics are continuous in the states and actions of the MFC MDP, it is known that the typical dynamic programming principle (i.e. Bellman equation) holds, and an optimal stationary deterministic policy will exist.

\begin{proposition}[\cite{hernandez2012discrete}, Theorem 4.2.3]
There exists a stationary deterministic optimal policy $\tilde \pi$ that maximizes $J(\tilde \pi)$.
\end{proposition}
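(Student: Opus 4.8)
The statement is an instance of a classical existence theorem for discounted Markov decision processes, so the plan is simply to verify the hypotheses of \cite{hernandez2012discrete}, Theorem~4.2.3, for the upper-level MDP: state space $\mathcal S \coloneqq \mathcal P(\mathcal Z) \times \Lambda$, action space $\mathcal H = \{h \colon \mathcal Z^d \to \mathcal P(\mathcal U)\}$, transition kernel $Q(\cdot \mid (\nu, \lambda), h) = P_\lambda(\lambda) \otimes \delta_{T_\nu(\nu, \lambda, h)}$, one-step reward $-D_t$ (equivalently, nonnegative cost $D_t$), and discount $\gamma \in (0,1)$. The conditions to be checked are the standard ones: compact (here constant) feasible action sets with upper semicontinuous feasible-action correspondence, weak continuity of the transition kernel, and lower semicontinuity together with a finite lower bound of the one-step cost. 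I would in fact establish the stronger properties of continuity and boundedness throughout; finiteness of the value function then follows at once since $D_t$ is nonnegative and bounded.

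\textbf{Compactness.} Since $\mathcal Z = \{0, \ldots, B\}$ is finite, $\mathcal P(\mathcal Z)$ is the standard simplex in $\mathbb R^{B+1}$, which is compact and convex, and -- assuming $\Lambda$ to be finite (or, more generally, a compact metric space) -- $\mathcal S$ is a compact metric space. Likewise, $\mathcal Z^d$ and $\mathcal U$ being finite, $\mathcal H \cong \mathcal P(\mathcal U)^{\mathcal Z^d}$ is a finite product of simplices, hence a compact convex subset of a Euclidean space, and the feasible-action correspondence $s \mapsto A(s) = \mathcal H$ is constant, hence trivially upper semicontinuous.

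\textbf{Continuity of the dynamics.} Both $T_\nu$ (via \eqref{eq:nuupdate}) and $D_t$ (via \eqref{eq:lossperq}) are assembled from the matrix exponentials $\exp(\mathbf Q(\nu, z)\Delta t)$ and $\exp(\bar{\mathbf Q}(\nu, z)\Delta t)$, and since the matrix exponential is continuous in its argument, it suffices that $(\nu, \lambda, h) \mapsto \mathbf Q(\nu, z)$ -- and hence $\bar{\mathbf Q}(\nu, z)$ -- be continuous for every $z \in \mathcal Z$. The only entry not manifestly continuous is the per-queue arrival rate $\lambda_t(\nu, z) = \frac{1}{\nu(z)} \lambda \int_{\mathcal Z^d \times \mathcal U} \mathbf 1_{\bar z_u = z}\, (\mu \otimes h)(\mathrm d\bar z, \mathrm du)$ with $\mu = \otimes_{k=1}^d \nu$ as in \eqref{eq:prodnu}, which at first sight is singular on the face $\{\nu(z) = 0\}$. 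The crucial point -- and, I expect, the only genuinely non-routine step -- is that the indicator $\mathbf 1_{\bar z_u = z}$ forces the factor $\nu(\bar z_u) = \nu(z)$ inside the product measure, so that $\int \mathbf 1_{\bar z_u = z}\, (\mu \otimes h)(\mathrm d\bar z, \mathrm du) = \nu(z)\, g(\nu, h, z)$, where $g(\nu, h, z) = \sum_{u=1}^d \sum_{\bar z_{-u} \in \mathcal Z^{d-1}} \big(\prod_{k \neq u} \nu(\bar z_k)\big)\, h(u \mid \bar z)$ (with $z$ placed in coordinate $u$ of $\bar z$) is a polynomial in $\nu$, linear in $h$, jointly continuous, and bounded by $d$. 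Hence $\lambda_t(\nu, z) = \lambda\, g(\nu, h, z)$ is continuous on $\mathcal S \times \mathcal H$ and bounded by $d \sup \Lambda$: the $1/\nu(z)$ singularity cancels. It then follows that $T_\nu$, and with it the kernel $Q(\cdot \mid (\nu,\lambda), h) = P_\lambda(\lambda) \otimes \delta_{T_\nu(\nu, \lambda, h)}$, is weakly continuous, since for bounded continuous $v \colon \mathcal S \to \mathbb R$ the integral $\int v\, \mathrm dQ = \sum_{\lambda'} P_\lambda(\lambda' \mid \lambda)\, v(T_\nu(\nu,\lambda,h), \lambda')$ is continuous in $(\nu, h)$ and, $\Lambda$ being discrete, in $\lambda$.

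\textbf{Boundedness of the cost and conclusion.} The same computation shows $D_t = \sum_z \nu(z) D^z_t(\Delta t) \ge 0$ is continuous in $(\nu, \lambda, h)$; and since $\sum_z \nu(z)\, \dot D^z_t = \sum_z \lambda_t(\nu, z)\, \nu(z)\, P^z_B \le \sum_z \nu(z)\, \lambda_t(\nu, z) = \lambda \int 1\, (\mu \otimes h) = \lambda$, integrating over $[0, \Delta t]$ yields $0 \le D_t \le \lambda \Delta t \le \Delta t \sup \Lambda < \infty$ uniformly in $(\nu,h)$, so the reward is bounded and continuous. With these properties in hand, \cite{hernandez2012discrete}, Theorem~4.2.3, applies: the optimal value function is the unique bounded solution of the Bellman equation, and a measurable selection theorem furnishes a maximizing selector $\tilde\pi^\star \colon \mathcal S \to \mathcal H$, which is the desired stationary deterministic optimal policy. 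I expect no essential obstacle beyond the cancellation of the $1/\nu(z)$ term noted above; the single assumption worth making explicit is compactness (e.g., finiteness) of $\Lambda$, needed both for weak continuity of $P_\lambda$ and for the uniform bound on $D_t$.
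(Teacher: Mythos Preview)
Your proposal is correct and follows the same approach as the paper: the paper does not give a separate proof of this proposition but simply asserts in one sentence that ``the expected cost function and the dynamics are continuous in the states and actions of the MFC MDP'' and then invokes \cite{hernandez2012discrete}, Theorem~4.2.3. Your write-up is in fact more detailed than the paper's treatment, in particular in explicitly verifying the cancellation of the apparent $1/\nu(z)$ singularity in $\lambda_t(\nu,z)$; the paper only uses the resulting bound $\lambda_t(\nu,z)\le d\lambda_t$ implicitly later, inside the proof of Theorem~1.
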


To find such a deterministic policy, an exact, closed-form solution is difficult due to the complexity of the associated transition model and continuous state and action spaces. Instead, we shall in the following employ well-established reinforcement learning techniques by exploring over stochastic policies $\tilde \pi_t \colon \mathcal P(\mathcal Z) \times \Lambda \to \mathcal P(\mathcal H)$, with the random decision rules $h_t \sim \tilde \pi_t(\nu_t, \lambda_t)$ as actions of the MFC MDP, to find the desired optimal stationary deterministic policy. 

It should be noted that in this section we have presented a system which has finite capacity queues with homogeneous servers, though this model can be extended to heterogeneous servers and infinite capacity queues, which we omit for space reasons.

\section{Theoretical Analysis}
\label{sec:theo}

Although our formulated mean-field model is intuitively a good approximation of the finite system, in this section we shall make this connection rigorous. Note that our model does not immediately fit into standard MFC frameworks introduced in \cite{gu2020mean, mondal2021approximation}, since we perform a double limit argument and continuous-to-discrete-time modelling. To verify the mean-field model, we shall show that performance in the finite system becomes arbitrarily close to the performance in the MFC system as long as the system is sufficiently large. Quantifying the error convergence rate more precisely is left to future work. For the following theoretical analysis, we shall consider the sequence of arrival rates $(\lambda_1, \lambda_2, \ldots)$ given a priori by conditioning on them, i.e. non-random $\lambda^{N,M}_t = \lambda^{M}_t = \lambda_t$.

\begin{theorem}
The performance of the $N,M$ system converges to the performance of the mean-field system under any stationary deterministic policy $\hat \pi$ as the system size becomes sufficiently large, i.e. for any $\varepsilon > 0$ there exists $N',M'(N') \in \mathbb N$ such that
\begin{align*}
    &\left| J(\hat \pi) - J^{N,M}(\hat \pi) \right| < \varepsilon
\end{align*}
for all $N > N', M > M'(N')$.
\end{theorem}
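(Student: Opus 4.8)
The plan is to prove convergence in two stages matching the two limits in the construction, using a triangle inequality
\begin{align*}
    \left| J(\hat\pi) - J^{N,M}(\hat\pi) \right| \le \left| J(\hat\pi) - J^{M}(\hat\pi) \right| + \left| J^{M}(\hat\pi) - J^{N,M}(\hat\pi) \right|,
\end{align*}
where $J^M$ denotes the performance of the intermediate infinite-agent, finite-queue system of Section~3.2. Since the discounted objective is a geometric series in the per-epoch drops $D_t \in [0, \infty)$ which are almost surely bounded (arrivals in a slot are Poisson with finite mean, and per-queue drops are bounded by arrivals, hence uniformly integrable with bounded expectation), for any target $\varepsilon$ one first picks a finite horizon $T$ so that the tail $\sum_{t>T} \gamma^t \E[D_t]$ contributes less than $\varepsilon/3$ uniformly in $N,M$; it then suffices to control the first $T$ epochs. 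Because $T$ is fixed and finite, I would prove the two remaining bounds by induction on $t \le T$: show that the empirical queue-state distribution $\mathbb H_t^{N,M}$ (resp.\ $\mathbb H_t^M$) stays close to the deterministic mean-field $\nu_t$ with high probability, and that closeness of the driving distributions propagates one step forward through the exact-discretization map, which is continuous (indeed Lipschitz on the simplex, being built from the matrix exponential $\exp(\bar{\mathbf Q}(\nu,z)\Delta t)$ whose entries depend continuously on $\nu$, together with the rate $\lambda_t(\nu,z)$ that is continuous where $\nu(z)$ is bounded away from $0$).

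For the agent limit $|J^M - J^{N,M}|$ with $M$ fixed: conditional on the queue states $\{z_t^{N,M,j}\}_j$, the arrival rate $\lambda_t^{N,M,j}$ in \eqref{eq:arrivalempnm} is an average of $N$ i.i.d.\ bounded terms driven by $\mathbb G_t^{N,M}$, so by a law-of-large-numbers / concentration argument (e.g.\ Hoeffding, since the summands $\sum_k \mathbf 1_{x_{t,k}^i=j}\mathbf 1_{u_t^i=k}$ lie in $[0,1]$) $\mathbb G_t^{N,M} \to \mathbb G_t^M$ and hence $\lambda_t^{N,M,j} \to \lambda_t^{M,j}$ uniformly in $j$ as $N \to \infty$; feeding this into the same continuous-time master-equation/CTMC evolution over $[0,\Delta t]$ and using that the CTMC transition semigroup depends continuously on its rates gives convergence of $\mathbb H_{t+1}^{N,M}$ and of $D_t^{N,M}$ to their infinite-agent counterparts in expectation. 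This closes the induction for the first limit and yields $N'(M)$ such that the second term is below $\varepsilon/3$.

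For the queue limit $|J - J^M|$: here the approximations are the product-form decoupling $\mu_t = \otimes_k \nu_t$ of the agent state distribution (exact in the $N,M$ pre-limit by \eqref{eq:xsample} with $\mathrm{Unif}$, and whose ``anonymous'' reduction $\bar z_t^i \in \mathcal Z^d$ becomes valid as $M \to \infty$ since sampling $d$ queues without distinguishing indices is asymptotically the same as sampling $d$ i.i.d.\ states from $\mathbb H_t^M$), and the Poisson-thinning argument of Section~3.3 that replaces per-queue arrival rates by the state-dependent rate \eqref{eq:finalrate}; both incur an error that vanishes as $M \to \infty$, with the dependence $M'(N')$ arising because the thinning heuristic needs $N \gg M$ (so that almost no agent gets two packets). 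Propagating through the Lipschitz exact-discretization map over the fixed horizon $T$ and taking $M$ large gives the first term below $\varepsilon/3$, completing the proof.

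I expect the main obstacle to be the regime where some $\nu_t(z)$ is small or zero, since the equivalent per-queue rate \eqref{eq:finalrate} and the matrix $\mathbf Q(\nu,z)$ divide by $\nu(z)$, so the map is not globally Lipschitz on the simplex; the rate $\lambda_t(z)$ stays finite (the numerator $\lambda_t'(z)$ carries a matching $\nu_t(z)$ factor through $\mathbb G_t$), but making the continuity argument uniform requires either tracking the product $\nu_t(z)P^z_{z'}(\Delta t)$ directly rather than the conditional chain, or a short separate argument that empty or nearly-empty states contribute negligibly to both $\nu_{t+1}$ and $D_t$. A secondary technical point is making the high-probability concentration statements about $\mathbb H_t$ uniform over the horizon and converting them into the stated expectation bound on $|J - J^{N,M}|$, which is routine given boundedness of $D_t$ once the per-step estimates are in hand.
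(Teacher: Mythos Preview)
Your proposal follows the same architecture as the paper's proof: triangle inequality through the intermediate infinite-agent $M$-queue system, tail truncation at a finite horizon $T$ using the uniform bound $\mathbb E[D_t] \le \lambda_t \Delta t$, and two separate limit arguments carried by induction on $t \le T$, each combining a law-of-large-numbers/Chebyshev step with continuity of the exact-discretization map $\nu \mapsto \exp(\bar{\mathbf Q}(\nu,z)\Delta t)$.

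One point where your reasoning is off and would need correction: in the queue limit $|J - J^M|$ you invoke the Poisson-thinning heuristic of Section~\ref{subsec:3c} as an approximation ``that vanishes as $M \to \infty$'' and attribute the dependence $M'(N')$ to the condition $N \gg M$. But $J^M$ is already the $N=\infty$ system and does not depend on $N$ at all; in the proof the per-queue rate in the $M$-system satisfies the \emph{exact} algebraic identity $\lambda_t^{M,j} = \lambda_t(\mathbb H_t^M, z_t^{M,j})$, which the paper establishes by a direct combinatorial calculation (no thinning and no $N$-dependence enter here). The $N$-dependence appears only in the other limit $|J^M - J^{N,M}|$, via the conditional LLN on $\mathbb G_t^{N,M}$ you describe; the resulting order of quantifiers is therefore: choose $M$ large for the first term, then $N$ large depending on $M$ for the second. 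Your identified obstacle concerning the $1/\nu_t(z)$ singularity is genuine; the paper handles it exactly as you suggest, via the uniform bound $\lambda_t(\nu,z) \le d\lambda_t$ (the numerator carries a matching $\nu(z)$ factor).
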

\begin{proof}
We will analyze
\begin{align*}
    \left| J(\hat \pi) - J^{N,M}(\hat \pi) \right| &\leq \left| J(\hat \pi) - J^{M}(\hat \pi) \right| + \left| J^{M}(\hat \pi) - J^{N,M}(\hat \pi) \right| \\
    &\leq \sum_{t=0}^{\infty} \gamma^t \left( \left| \mathbb E \left[ D_t - D^{M}_t \right] \right| + \left| \mathbb E \left[ D^{M}_t - D^{N,M}_t \right] \right| \right)
\end{align*}
where $D^{M}_t$ denotes the random loss of packets in the infinite-agent finite-queue system.

For the first term, consider $M \to \infty$ and observe that
\begin{align*}
    \mathbb E \left[ D_t \right] &= 
    \mathbb E \left[ \int \left( \exp{(\bar{\mathbf Q}(\nu_t, z) \Delta t)} 
    \cdot 
    \begin{bmatrix}
        \mathbf e_{z}  \\
        0
    \end{bmatrix} \right)_{B+1} \nu_t(\mathrm dz) \right], \\
    \mathbb E \left[ D^{M}_t \right] &= 
    \mathbb E \left[ \frac 1 M \sum_{j} 
    \left( \exp{(\bar{\mathbf Q}^{M,j} \Delta t)} 
    \cdot 
    \begin{bmatrix}
        \mathbf e_{z^{M,j}_t}  \\
        0
    \end{bmatrix} \right)_{B+1} \right] \\
    &= 
    \mathbb E \left[ \int
    \left( \exp{(\bar{\mathbf Q}(\mathbb H_t^{M}, z) \Delta t)} 
    \cdot 
    \begin{bmatrix}
        \mathbf e_{z}  \\
        0
    \end{bmatrix} \right)_{B+1} \mathbb H_t^{M}(\mathrm dz) \right],
\end{align*}
with the rate matrices $\bar{\mathbf Q}^{M,j}$ of the infinite-agent finite-queue system, where the last equality follows since the rates in the $M$-queue case for each queue $j$ are indeed given by 
\begin{align*}
    &\lambda^{M,j}_t = M \lambda_t \int_{\mathcal X \times \mathcal U} \sum_{k=1}^d \mathbf 1_{x_k = j \wedge u = k} \, \mathbb G_t^{M}(\mathrm dx, \mathrm du) \\
    &\quad = \lambda_t \sum_{k=1}^d \sum_{x \in \mathcal X} \sum_{u \in \mathcal U} \mathbf 1_{x_k = j \wedge u = k} \frac 1 {M^{d-1}} h_t(u \mid z_t^{M,x_1}, \ldots, z_t^{M,x_d}) \\
    &\quad = \lambda_t \sum_{k=1}^d \sum_{x_k \in \{1, \ldots, M\}} \sum_{x_{-k} \in \{1, \ldots, M\}^{d-1}} \\
    &\qquad\qquad \sum_{u \in \mathcal U} \mathbf 1_{x_k = j \wedge u = k} \frac 1 {M^{d-1}} h_t(u \mid z_t^{M,x_1}, \ldots, z_t^{M,x_d}) \\
    &\quad = \lambda_t \sum_{k=1}^d \sum_{x_k \in \{1, \ldots, M\}} \sum_{x_{-k} \in \{1, \ldots, M\}^{d-1}} \sum_{u \in \mathcal U} \sum_{\bar z_k \in \mathcal Z} \\
    &\qquad\qquad \sum_{\bar z_{-k} \in \mathcal Z^{d-1}} \mathbf 1_{x_k = j \wedge u = k} \frac 1 {M^{d-1}} h_t(u \mid (\bar z_k, \bar z_{-k})) \mathbf 1_{\bigwedge_{i=1}^d z_t^{M,x_i} = \bar z_i} \\
    &\quad = \lambda_t \sum_{k=1}^d \sum_{\bar z_k \in \mathcal Z} \sum_{\bar z_{-k} \in \mathcal Z^{d-1}} \sum_{u \in \mathcal U} \mathbf 1_{\bar z_k = z^{M,j}_t \wedge u = k} \\
    &\qquad\qquad \cdot \underbrace{\frac {\sum_{x_{-k} \in \{1, \ldots, M\}^{d-1}} \mathbf 1_{\bigwedge_{i \neq k} z_t^{M,x_i} = \bar z_i}} {M^{d-1}}}_{\prod_{i \neq k} \mathbb H^M_t(\bar z_i)} h(u \mid (\bar z_k, \bar z_{-k})) \\
    &\quad = \lambda_t \sum_{k=1}^d \sum_{\bar z \in \mathcal Z^d} \sum_{u \in \mathcal U} \mathbf 1_{\bar z_k = z^{M,j}_t \wedge u = k} \prod_{i \neq k} \mathbb H^M_t(\bar z_i) h_t(u \mid \bar z) \\
    &\quad = \lambda_t \sum_{\bar z \in \mathcal Z^d} \sum_{u \in \mathcal U} \mathbf 1_{\bar z_u = z^{M,j}_t} \prod_{i \neq u} \mathbb H^M_t(\bar z_i) h_t(u \mid \bar z) \\
    &\quad = \frac{\lambda_t \int_{\mathcal Z^d \times \mathcal U} \mathbf 1_{\bar z_u = z^{M,j}_t} \, (\mathbb H^M_t \otimes h_t)(\mathrm d\bar z, \mathrm du)}{\mathbb H^M_t(z^{M,j}_t)} = \lambda_t(\mathbb H^M_t, z^{M,j}_t)
\end{align*}
where the indices $-k$ denote all dimensions other than $k$. 

Therefore, as long as $\mathbb H_t^{M} \xrightarrow{d} \nu_t$ (convergence in distribution), we find $\mathbb E \left[ D_t - D^{M}_t \right] \to 0$ by the continuous mapping theorem. In particular, this holds true if $\mathbb H_t^{M} \xrightarrow{p} \nu_t$, i.e. for any $\delta > 0$ as $M \to \infty$,
\begin{align*}
    \mathbb P \left( \left\Vert \mathbb H_t^{M} - \nu_t \right\Vert > \delta \right) \to 0.
\end{align*}

We show this by induction: At $t=0$ the statement holds by the law of large numbers. Now assume that the statement holds for $t$, then for $t+1$ we first show that for any $\varepsilon, \delta > 0$ there exists $M', \delta' > 0$ such that for all $M > M'$ we have
\begin{align*}
    \mathbb P \left( \left\Vert \mathbb H_{t+1}^{M} - \nu_{t+1} \right\Vert > \delta \innermid \left\Vert \mathbb H_t^{M} - \nu_t \right\Vert \leq \delta' \right) < \varepsilon.
\end{align*}
Note that
\begin{align*}
    &\mathbb P \left( \left\Vert \mathbb H_{t+1}^{M} - \nu_{t+1} \right\Vert > \delta \innermid \left\Vert \mathbb H_t^{M} - \nu_t \right\Vert \leq \delta' \right) \\
    &\leq \sum_{z \in \mathcal Z} \mathbb P \left( \left| \mathbb H_{t+1}^{M}(z) - \nu_{t+1}(z) \right| > \delta \innermid \left\Vert \mathbb H_t^{M} - \nu_t \right\Vert \leq \delta' \right) \\
    &\leq \sum_{z \in \mathcal Z} \mathbb P \left( \left| \mathbb H_{t+1}^{M}(z) - \mathbb E \left[ \mathbb H_{t+1}^{M}(z) \innermid \mathbb H_{t}^{M} \right] \right| > \frac \delta 2 \innermid \left\Vert \mathbb H_{t}^{M} - \nu_t \right\Vert \leq \delta' \right) \\
    &\quad + \sum_{z \in \mathcal Z} \mathbb P \left( \left| \mathbb E \left[ \mathbb H_{t+1}^{M}(z) \innermid \mathbb H_{t}^{M} \right] - \nu_{t+1}(z) \right| > \frac \delta 2 \innermid \left\Vert \mathbb H_t^{M} - \nu_t \right\Vert \leq \delta' \right)
\end{align*}
and we shall bound the former term as follows: Define
\begin{align*}
    \Delta_{z_{t+1}^{M,j} \mid z_{t}^{M,j}} f \coloneqq f(z_{t+1}^{M,j}) - \mathbb E \left[ f(z_{t+1}^{M,j}) \innermid f(z_{t}^{M,j}) \right]
\end{align*}
and let $f \colon \mathcal Z \to \mathbb R$, then we have
\begin{align*}
    &\mathbb P \left( \left| \mathbb H_{t+1}^{M}(f) - \mathbb E \left[ \mathbb H_{t+1}^{M}(f) \innermid \mathbb H_{t}^{M} \right] \right| > \frac \delta 2 \innermid \left\Vert \mathbb H_{t}^{M} - \nu_t \right\Vert \leq \delta' \right) \\
    &\quad = \mathbb P \left( \left| \frac 1 M \sum_{j=1}^M \Delta_{z_{t+1}^{M,j} \mid z_{t}^{M,j}} f \right| > \frac \delta 2 \innermid \left\Vert \mathbb H_{t}^{M} - \nu_t \right\Vert \leq \delta' \right) \\
    &\quad \leq \frac 4 {\delta^2} \mathbb E \left[ \left( \frac 1 M \sum_{j=1}^M \left( \Delta_{z_{t+1}^{M,j} \mid z_{t}^{M,j}} f \right) \right)^2 \innermid \left\Vert \mathbb H_{t}^{M} - \nu_t \right\Vert \leq \delta' \right] \\
    &\quad = \frac 4 {\delta^2 M^2} \sum_{j=1}^M \mathbb E \left[ \left( \Delta_{z_{t+1}^{M,j} \mid z_{t}^{M,j}} f \right)^2 \innermid \left\Vert \mathbb H_{t}^{M} - \nu_t \right\Vert \leq \delta' \right] \\
    &\quad \leq \frac {16 \max_z f(z)^2} {\delta^2 M} \to 0
\end{align*}
as $M \to \infty$ by conditional independence of $(z^{M,1}_{t+1}, \ldots, z^{M,M}_{t+1})$ given $z^{M}_{t} = (z^{M,1}_{t}, \ldots, z^{M,M}_{t})$, the Chebyshev inequality and tower property. In particular, this holds for $f_z \equiv \mathbf 1_{\{z\}}$, $z \in \mathcal Z$. Therefore,
\begin{align*}
    \sum_{z \in \mathcal Z} \mathbb P \left( \left| \mathbb H_{t+1}^{M}(z) - \mathbb E \left[ \mathbb H_{t+1}^{M}(z) \innermid \mathbb H_{t}^{M} \right] \right| > \frac \delta 2 \innermid \left\Vert \mathbb H_{t}^{M} - \nu_t \right\Vert \leq \delta' \right) \to 0
\end{align*}
as $M \to \infty$. For the latter term, note that analogously
\begin{align*}
    &\left| \mathbb E \left[ \mathbb H_{t+1}^{M}(f) \innermid \mathbb H_{t}^{M} \right] - \nu_{t+1}(f) \right| \\
    &\quad \leq \left| \sum_{z \in \mathcal Z} f(z) \sum_{z' \in \mathcal Z} \left( \mathbb H_{t}^{M}(z) - \nu_t(z) \right) 
    \right.\\&\hspace{3cm}\left.
    \cdot \left( \exp{(\bar{\mathbf Q}(\mathbb H_t^{M}, z') \Delta t)} 
    \cdot 
    \begin{bmatrix}
        \mathbf e_{z'}  \\
        0
    \end{bmatrix} \right)_{z} \right| \\
    &\qquad + \left| \sum_{z \in \mathcal Z} f(z) \sum_{z' \in \mathcal Z} \nu_t(z) \cdot \left( \exp{(\bar{\mathbf Q}(\mathbb H_t^{M}, z') \Delta t)} 
    \cdot 
    \begin{bmatrix}
        \mathbf e_{z'}  \\
        0
    \end{bmatrix} 
    \right.\right.\\&\hspace{3cm}\left.\left.
    - \exp{(\bar{\mathbf Q}(\nu_t, z') \Delta t)} 
    \cdot 
    \begin{bmatrix}
        \mathbf e_{z'}  \\
        0
    \end{bmatrix} \right)_{z} \right|
\end{align*}
and by boundedness ($\lambda_t(\nu, z) \leq d \lambda_t$) and continuity in $\mathbb H_{t}^{M}, \nu_t$, for any $\varepsilon > 0$ there exists $\delta' > 0$ such that $\left\Vert \mathbb H_{t}^{M} - \nu_t \right\Vert \leq \delta'$ implies $\left| \mathbb E \left[ \mathbb H_{t+1}^{M}(f) \innermid \mathbb H_{t}^{M} \right] - \nu_{t+1}(f) \right| < \varepsilon$. As a result, by the law of total probability
\begin{align*}
    &\mathbb P \left( \left\Vert \mathbb H_{t+1}^{M} - \nu_{t+1} \right\Vert > \delta \right) \\
    &= \mathbb P \left( \left\Vert \mathbb H_{t+1}^{M} - \nu_{t+1} \right\Vert > \delta \innermid \left\Vert \mathbb H_t^{M} - \nu_t \right\Vert \leq \delta' \right) \cdot \mathbb P \left( \left\Vert \mathbb H_t^{M} - \nu_t \right\Vert \leq \delta' \right) \\
    &\quad + \mathbb P \left( \left\Vert \mathbb H_{t+1}^{M} - \nu_{t+1} \right\Vert > \delta \innermid \left\Vert \mathbb H_t^{M} - \nu_t \right\Vert > \delta' \right) \cdot \mathbb P \left( \left\Vert \mathbb H_t^{M} - \nu_t \right\Vert > \delta' \right) \\
    &\leq \mathbb P \left( \left\Vert \mathbb H_{t+1}^{M} - \nu_{t+1} \right\Vert > \delta \innermid \left\Vert \mathbb H_t^{M} - \nu_t \right\Vert \leq \delta' \right) + \mathbb P \left( \left\Vert \mathbb H_t^{M} - \nu_t \right\Vert > \delta' \right) \\
    &\to 0
\end{align*}
since we can choose $M', \delta'$ according to the former analysis and the induction assumption, completing the induction step. It then follows at all times $t$ by the continuous mapping theorem that
\begin{align*}
    \mathbb E \left[ D_t - D^{M}_t \right] \to 0.
\end{align*}

For the second term, fix $M$ and let $N \to \infty$. We find that
\begin{align*}
    \mathbb E \left[ D^{M}_t \right] &= 
    \frac 1 M \sum_{j} 
    \mathbb E \left[ \left( \exp{(\bar{\mathbf Q}^{M,j} \Delta t)} 
    \cdot \begin{bmatrix}
        \mathbf e_{z^{M,j}_t}  \\
        0
    \end{bmatrix} \right)_{B+1} \right], \\
    \mathbb E \left[ D^{N,M}_t \right] &= 
    \frac 1 M \sum_{j} 
    \mathbb E \left[ \left( \exp{(\bar{\mathbf Q}^{N,M,j} \Delta t)} 
    \cdot \begin{bmatrix}
        \mathbf e_{z^{N,M,j}_t}  \\
        0
    \end{bmatrix} \right)_{B+1} \right]
\end{align*}
where $\bar{\mathbf Q}^{N,M,j}$ and $\bar{\mathbf Q}^{M,j}$ are continuous functions of
\begin{align*}
    \lambda^{N,M,j}_t &= \lambda_t \frac M N \sum_{i=1}^N \sum_{k=1}^d \mathbf 1_{x_{t,k}^{i} = j} \mathbf 1_{{u_{t}^{i} = k}},\\
    \lambda^{M,j}_t &= \lambda_t M \int_{\mathcal X \times \mathcal U} \sum_{k=1}^d \mathbf 1_{x_k = j \wedge u = k} \, \mathbb G_t^{M}(\mathrm dx, \mathrm du),
\end{align*}
and as $N \to \infty$, by the conditional law of large numbers \citep[Theorem 3.5]{majerek2005conditional}
\begin{align*}
    \lambda^{N,M,j}_t \to \lambda^{M,j}_t
\end{align*}
a.s. conditional on $z^{N,M,j}_t = z^{M,j}_t = z$ for any $z \in \mathcal Z$. Therefore, again by the continuous mapping theorem, for all $j=1,\ldots,M$ a.s.
\begin{align*}
    \mathbb E \left[ \exp{(\bar{\mathbf Q}^{N,M,j} \Delta t)} \innermid z^{N,M,j}_t = z \right] \to \mathbb E \left[ \exp{(\bar{\mathbf Q}^{M,j} \Delta t)} \innermid z^{M,j}_t = z \right].
\end{align*}

At the same time, $z^{N,M}_t \xrightarrow{d} z^{M}_t$ at all times $t$ as $N \to \infty$ via induction: For $t=0$ trivially $\mathcal L(z^{N,M}_t) = \nu_0 = \mathcal L(z^{M}_t)$. For $t+1$
\begin{align*}
    &\left| \mathbb P(z^{N,M}_{t+1} = z) - \mathbb P(z^{M}_{t+1} = z) \right| \\
    &\leq \sum_{z' \in \mathcal Z} \left| \mathbb P(z^{N,M}_{t} = z') - \mathbb P(z^{M}_{t} = z') \right| \cdot \mathbb P(z^{N,M}_{t+1} = z \mid z^{N,M}_{t} = z') \\
    &\quad + \sum_{z' \in \mathcal Z} \mathbb P(z^{M}_{t} = z') \\
    &\hspace{1cm} \cdot \left| \mathbb P(z^{N,M}_{t+1} = z \mid z^{N,M}_{t} = z') - \mathbb P(z^{M}_{t+1} = z \mid z^{M}_{t} = z') \right|
\end{align*}
where the former tends to zero by induction assumption, while for the latter we have
\begin{multline*}
    \left| \mathbb P(z^{N,M}_{t+1} = z \mid z^{N,M}_{t} = z') - \mathbb P(z^{M}_{t+1} = z \mid z^{M}_{t} = z') \right| \\
    = \left| \prod_{j=1}^M \mathbb E \left[ \left( \exp{(\bar{\mathbf Q}^{N,M,j} \Delta t)} 
    \cdot \begin{bmatrix}
        \mathbf e_{z'^j}  \\
        0
    \end{bmatrix} \right)_{z^j} \innermid z^{N,M}_{t} = z' \right] \right. \\
    - \left. 
    \prod_{j=1}^M \mathbb E \left[ \left( \exp{(\bar{\mathbf Q}^{M,j} \Delta t)} 
    \cdot \begin{bmatrix}
        \mathbf e_{z'^j}  \\
        0
    \end{bmatrix} \right)_{z^j} \innermid z^{M}_{t} = z' \right] \right| \to 0
\end{multline*}
as $N \to \infty$ again as $\bar{\mathbf Q}^{N,M,j} \to \bar{\mathbf Q}^{M,j}$ conditionally a.s. for each $j$.

By Slutzky's theorem (on the conditional probability spaces given $z^{N,M,j}_t = z^{M,j}_t = z$), we have 
\begin{align*}
    &\mathbb E \left[ \left( \exp{(\bar{\mathbf Q}^{N,M,j} \Delta t)} 
    \cdot \begin{bmatrix}
        \mathbf e_{z^{N,M,j}_t}  \\
        0
    \end{bmatrix} \right)_{B+1} \innermid z^{N,M,j}_t = z \right] \\
    &\quad \to \mathbb E \left[ \left( \exp{(\bar{\mathbf Q}^{M,j} \Delta t)} 
    \cdot \begin{bmatrix}
        \mathbf e_{z^{M,j}_t}  \\
        0
    \end{bmatrix} \right)_{B+1} \innermid z^{M,j}_t = z \right]
\end{align*}
for any $z \in \mathcal Z$, such that
\begin{align*}
    &\mathbb E \left[ \left( \exp{(\bar{\mathbf Q}^{N,M,j} \Delta t)} 
    \cdot \begin{bmatrix}
        \mathbf e_{z^{N,M,j}_t}  \\
        0
    \end{bmatrix} \right)_{B+1} \right] \\
    &= \sum_{z \in \mathcal Z} \mathbb E \left[ \left( \exp{(\bar{\mathbf Q}^{N,M,j} \Delta t)} 
    \cdot \begin{bmatrix}
        \mathbf e_{z^{N,M,j}_t}  \\
        0
    \end{bmatrix} \right)_{B+1} \innermid z^{N,M,j}_t = z \right] \\
    &\hspace{2cm}  \cdot \mathbb P \left( z^{N,M,j}_t = z \right) \\
    &\to \sum_{z \in \mathcal Z} \mathbb E \left[ \left( \exp{(\bar{\mathbf Q}^{M,j} \Delta t)} 
    \cdot \begin{bmatrix}
        \mathbf e_{z^{M,j}_t}  \\
        0
    \end{bmatrix} \right)_{B+1} \innermid z^{M,j}_t = z \right] \cdot \mathbb P \left( z^{M,j}_t = z \right) \\
    &= \mathbb E \left[ \left( \exp{(\bar{\mathbf Q}^{M,j} \Delta t)} 
    \cdot \begin{bmatrix}
        \mathbf e_{z^{M,j}_t}  \\
        0
    \end{bmatrix} \right)_{B+1} \right]
\end{align*}
which shows that $\mathbb E \left[ D^{N,M}_t \right] \to \mathbb E \left[ D^{M}_t \right]$ at all times $t$.

Now note that the terms $D_t, D^{M}_t, D^{N,M}_t$ are uniformly bounded by the maximum expected average number of lost packets by dropping all packets, given by the expectation of the Poisson-distributed number of arriving packets $\lambda_t \cdot \Delta t$. Therefore, for any $\varepsilon > 0$ we can choose $T$ such that 
\begin{align*}
    \sum_{t=T}^{\infty} \gamma^t \left( \left| \mathbb E \left[ D_t - D^{M}_t \right] \right| + \left| \mathbb E \left[ D^{M}_t - D^{N,M}_t \right] \right| \right) < \frac \varepsilon 3.
\end{align*}
Consequently choose $M$ sufficiently large such that
\begin{align*}
    \left| \mathbb E \left[ D_t - D^{M}_t \right] \right| < \frac \varepsilon {3T}, \quad \forall t \in \{0,1,\ldots,T-1\}
\end{align*}
and similarly choose $N$ sufficiently large to obtain
\begin{align*}
    \left| \mathbb E \left[ D^{M}_t - D^{N,M}_t \right] \right| < \frac \varepsilon {3T}, \quad \forall t \in \{0,1,\ldots,T-1\}
\end{align*}
according to the prequel, such that $\left| J(\hat \pi) - J^{N,M}(\hat \pi) \right| < \varepsilon$.
\end{proof}
Therefore, our mean-field model is well-motivated for sufficiently large systems, as we will also verify numerically.

\section{Experiments}
\label{sec:experiments}

In this section, we will begin by giving details on the experimental setup. Afterwards, we will demonstrate numerical results of applying reinforcement learning to the MFC MDP problem.

We have $M$ homogeneous queues with exponential service rate $\alpha$ and $N$ clients with Markov modulated arrival rate $\lambda$. 
Beginning with $\lambda_0 \sim \mathrm{Unif}(\{\lambda_h, \lambda_l\})$, at each decision epoch the arrival rate switches between high, $\lambda_h$, and low, $\lambda_l$, levels, using the transition law
\begin{align}
    \mathbb P(\lambda_{t+1} = \lambda_l \mid \lambda_t = \lambda_h) = 0.2, \\
    \mathbb P(\lambda_{t+1} = \lambda_h \mid \lambda_t = \lambda_l) = 0.5.
\end{align}
In general, the experiments could be conducted with more levels of arrival rates and with different modulation rates estimated from a real system, though in our work we will use two arbitrarily chosen values to show the theoretical applicability of our methodology. The values for the system parameters in all of our experiments are given in Table \ref{table:parameters}.

\begin{table}
    \centering
    \caption{System parameters used in the experiments.}
    \label{table:parameters}
    \begin{tabular}{@{}ccc@{}}
    \toprule
    Symbol     & Name          & Value     \\ \midrule
    $\Delta t$          &    Time step size   & $1-10$  \\
    $\alpha$          &    Service rate   & 1   \\
    $(\lambda_h, \lambda_l)$          &    Arrival rates & $(0.9, 0.6)$   \\
    $N$          &    Number of clients   & $1000 - 1000000$   \\
    $M$          &    Number of queues  & $100 - 1000$   \\
    $d$          &    Number of accessible queues   & $2$   \\
    $n$          &    Monte Carlo simulations   & $100$   \\
    $B$          &    Queue buffer size   & $5$   \\
    $\nu_0$          &    Queue starting state distribution   & $[1, 0, 0, \ldots]$   \\
    $D$          &    Drop penalty per job   & $1$   \\ 
    $T$          &    Training episode length   & $500$   \\
    $T_\mathrm{e}$          &    Evaluation episode length   & $50 - 500$   \\ \bottomrule
    \end{tabular}
\end{table}

In order to assess the performance of our MF policy, we compare it to JSQ($d$) and the random policy, RND.
In JSQ($d$), at every decision epoch, $d$ queues are selected out of $M$ and jobs are allocated to the shortest one.
In RND, we similarly select $d$ queues randomly out of $M$ and instead allocate the jobs to a random queue out of the $d$ queues, which will be equivalent to a completely random selection out of $M$ queues for sufficiently large $N \gg M$.
In our work, we have used $d=2$, since in \cite{mitzenmacher2001power} it has been shown that while moving from $d=1$ to $d=2$ shows an exponential increase in performance of JSQ($d$), an additional increase to $d=3$ does not add much in terms of achieved performance.

In order to obtain our MF policy by solving the optimal control problem, we apply proximal policy optimization (PPO) \cite{schulman2017proximal} using the RLlib implementation \cite{liang2018rllib}, a well-known and robust policy gradient reinforcement learning algorithm. The learning algorithm hyperparameters used in our experiments can be found in Table~\ref{table:hyperparameters}.

\begin{table}
    \centering
    \caption{Hyperparameter configuration for PPO.}
    \label{table:hyperparameters}
    \begin{tabular}{@{}ccc@{}}
    \toprule
    Symbol     & Name          & Value     \\ \midrule
    $\gamma$ &   Discount factor &  $0.99$\\
    $\lambda_\mathrm{RL}$ &   GAE lambda &  $1$\\
    $\beta$ &   KL coefficient & $0.2$ \\
    $\epsilon$ &  Clip parameter & $0.3$ \\
    $l_{r}$ &   Learning rate & $0.00005$ \\
    $B_{b}$ &  Training batch size &  $4000$ \\
    $B_{m}$ & SGD Mini batch size &  $128$ \\
    $T_b$ &  Number of epochs & $30$ \\  \bottomrule
    \end{tabular}
\end{table}


\begin{figure}[b]
\center
    \includegraphics[width=0.75\linewidth]{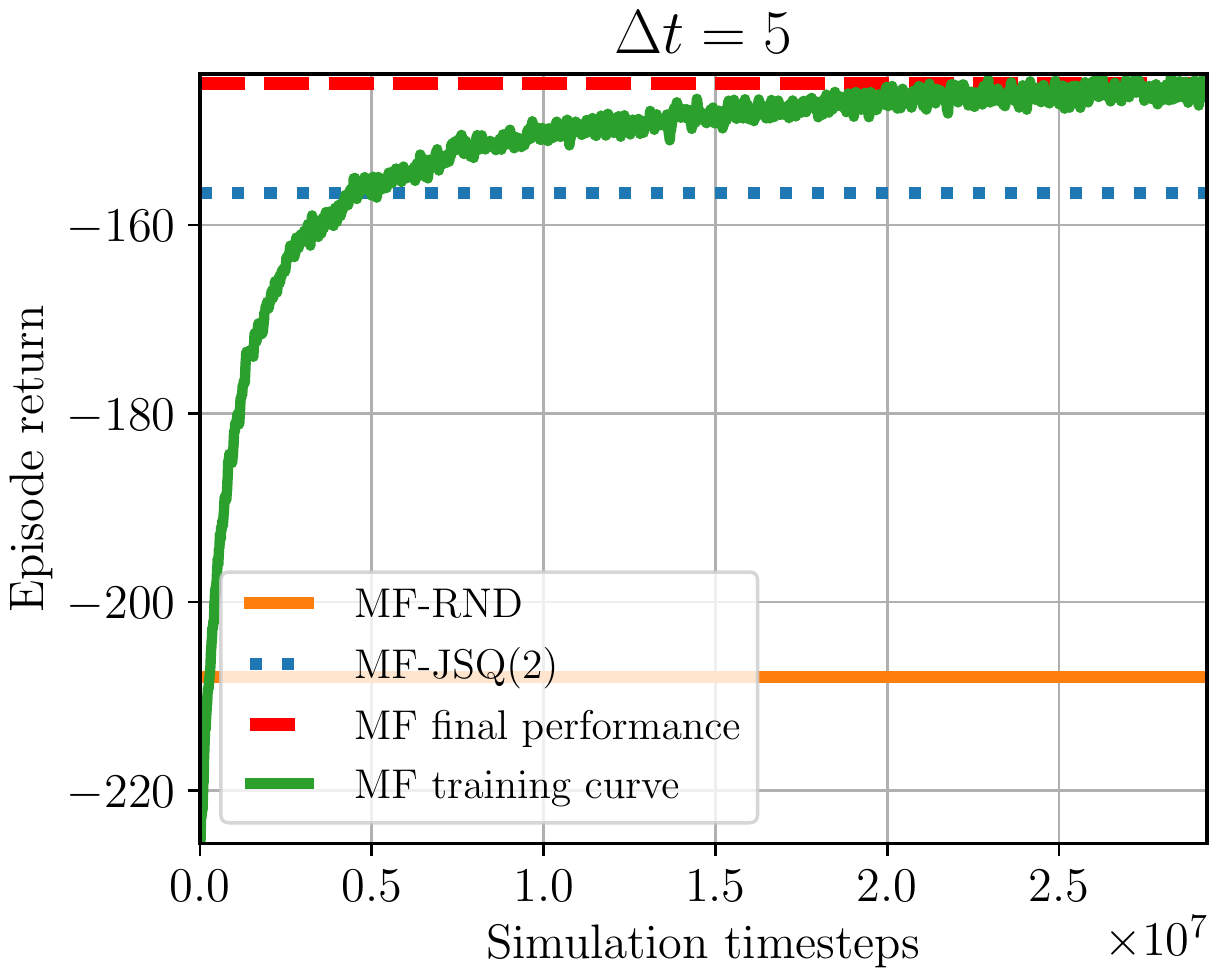}
    \caption{Training curve for the MF policy for $\Delta t=5$ and $T_\mathrm{e} = 500$ timesteps -- i.e. the expected negative number of packet drops per episode during training -- together with a comparison to the MF-JSQ(2) and MF-RND policies. The horizontal lines indicate the estimated expected returns for each policy. The red dotted line indicates the final achieved return of the learned MF policy in the mean-field MDP.}
 \label{fig:progress}
\end{figure}

In Figure~\ref{fig:progress}, we observe the learning curve of the applied reinforcement learning algorithm for $\Delta t=5$ and find that the simple parameterization of the lower-level policies is indeed successful and leads to stable learning. For the demonstrated experiment, we trained in parallel (offline) on $20$ cores of a commodity server CPU for approximately $35$ hours, after which the optimal policy can be applied in practice, to finite systems.
Here, MF-JSQ($2$) and MF-RND refer to the corresponding JSQ and RND policies in the mean-field model, i.e. each applies a fixed $h_t$ regardless of the current queue state distribution $\nu_t$. In the case of MF-JSQ given by
\begin{align}
    h_t(u \mid \bar z) = 
    \begin{cases}
        0 & \quad \text{ if } u \centernot\in \argmin_{u'} \bar z_{u'} \\
        \frac 1 {N_\mathrm{min}} & \quad \text{ else }
    \end{cases}
\end{align}
where $N_\mathrm{min}$ is the number of actions $u$ that minimize  the chosen queue's state $\bar z_{u}$. In the case of MF-RND, we similarly choose
\begin{align}
    h_t(u \mid \bar z) = \frac 1 {|\mathcal U|}, \quad \forall (\bar z, u) \in \mathcal Z^d \times \mathcal U \, 
\end{align}

As expected, indicated by the horizontal lines, the JSQ($2$) and random (RND) assignment policies in the mean-field case are both suboptimal for the chosen delay time of $\Delta t = 5$, and our reinforcement learning approach is capable of finding better load balancing policies after approximately $5$ million simulated decision epochs. Though we have tried Dirichlet-parameterized upper-level policies to directly output simplex-valued actions in order to eliminate the need for manual normalization, we found that performance was significantly worse, hence motivating our approach. 

\begin{algorithm}[t]
    \caption{Application of MFC policy in finite system}
    \label{alg}
    \begin{algorithmic}[1]
        \STATE \textbf{Input}: System parameters from Table~\ref{table:parameters}
        \STATE \textbf{Input}: Markovian upper-level policy $\tilde \pi = \{ \tilde \pi_t \}_{t \geq 0}$
        \STATE Initialize $\lambda_0 \sim \mathrm{Unif}(\{\lambda_h, \lambda_l\})$.
        \FOR {$j = 1, \ldots, M$}
            \STATE Initialize queue states $z_0^{j} \sim \nu_0$.
        \ENDFOR
        \FOR {$t = 0, 1, \ldots, T_\mathrm e$}
            \STATE Compute empirical distribution $\mathbb H_t^{M} = \frac{1}{M} \sum_{j=1}^M \delta_{z_t^{j}}$.
            \STATE Sample decision rule $h_t \sim \tilde \pi_t(\mathbb H_t^M, \lambda_t)$.
            \FOR {$i = 1, \ldots, N$}
                \STATE Sample agent state $x_t^{i} \sim \otimes_{k=1}^d \mathrm{Unif}(\{1,\ldots,M\})$.
                \STATE Compute anonymous state $\bar z_t^{i} = (z_t^{x_{t,1}^{i}}, \ldots, z_t^{x_{t,d}^{i}})$.
                \STATE Sample agent action $u_t^{i} \sim h_t(\bar z_t^{i})$.
            \ENDFOR
            \FOR {$j = 1, \ldots, M$}
                \STATE Simulate continuous-time Markov chain $y^{j}$ with jump rates $\lambda_t^j, \alpha$ and $y^{j}(0) = z_t^j$ for $\Delta t$ time units.
                \STATE Count number of dropped packets.
                \STATE Set queue state $z_{t+1}^j = y^j(\Delta t)$.
            \ENDFOR
            \STATE Sample $\lambda_{t+1} \sim \mathbb P(\lambda_{t+1} \mid \lambda_t)$.
        \ENDFOR
        \RETURN Number of dropped packets.
    \end{algorithmic}
\end{algorithm}

\begin{figure*}
\centering
\includegraphics[width=0.99\linewidth]{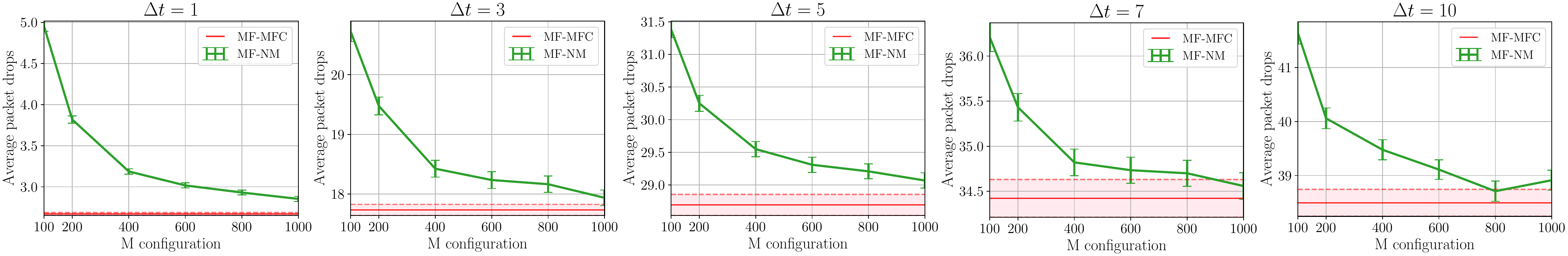}
    \caption{Comparison of the estimated expected packet drops (lower is better) of MF policies over the number of queues $M$ in the finite system for different values of $\Delta t$, together with $95\%$ confidence intervals depicted as shaded regions and error bars. Here, we use total running times of approximately $500$ time units, and $N = M^2$ to fulfill $N \gg M$. The red dotted line indicates the equivalent achieved return of the learned MF policy in the mean-field control MDP, i.e. the limiting model as $N \gg M \to \infty $. It can be observed that as the system size $N = M^2$ increases, the performance under the MF policy (green) becomes increasingly close to the mean-field system performance (red), validating the accuracy of our mean-field formulation.}
 \label{fig:all_delta_t}
\end{figure*}


\paragraph{Performance comparison on finite systems}
 We will now compare the performance of the evaluated load balancing algorithms on systems of finite size. For simulation of the finite-agent and finite-queue system, we simulate the continuous-time Markov processes exactly by sampling exponential waiting times for all events according to the Gillespie algorithm \cite{gillespie1977exact}. For an easy comparison between different $\Delta t$, we set the episode lengths $T_\mathrm e$ for evaluation to the integer nearest to $\frac{500}{\Delta t}$. Pseudocode for simulating and applying our MF policy in the finite system is given in Algorithm~\ref{alg} \footnote{\url{https://github.com/AnamTahir7/mfc_large_queueing_systems.git}}. 

In Figure~\ref{fig:all_delta_t}, we show that the performance of the final learned MF policies over a wide range of delays $\Delta t$ and system sizes $(N,M)$.
It can be seen that the overall achievable performance of our MF policy increases up to the performance achieved in the MFC MDP (red dotted line) as the system size $(N,M)$ becomes sufficiently large ($N \gg M \gg 1$). 
Hence, our findings empirically validate the fact that our mean-field approximations are indeed accurate for sufficiently large system sizes.


\begin{figure}[b]
\center
\includegraphics[width=0.95\linewidth]{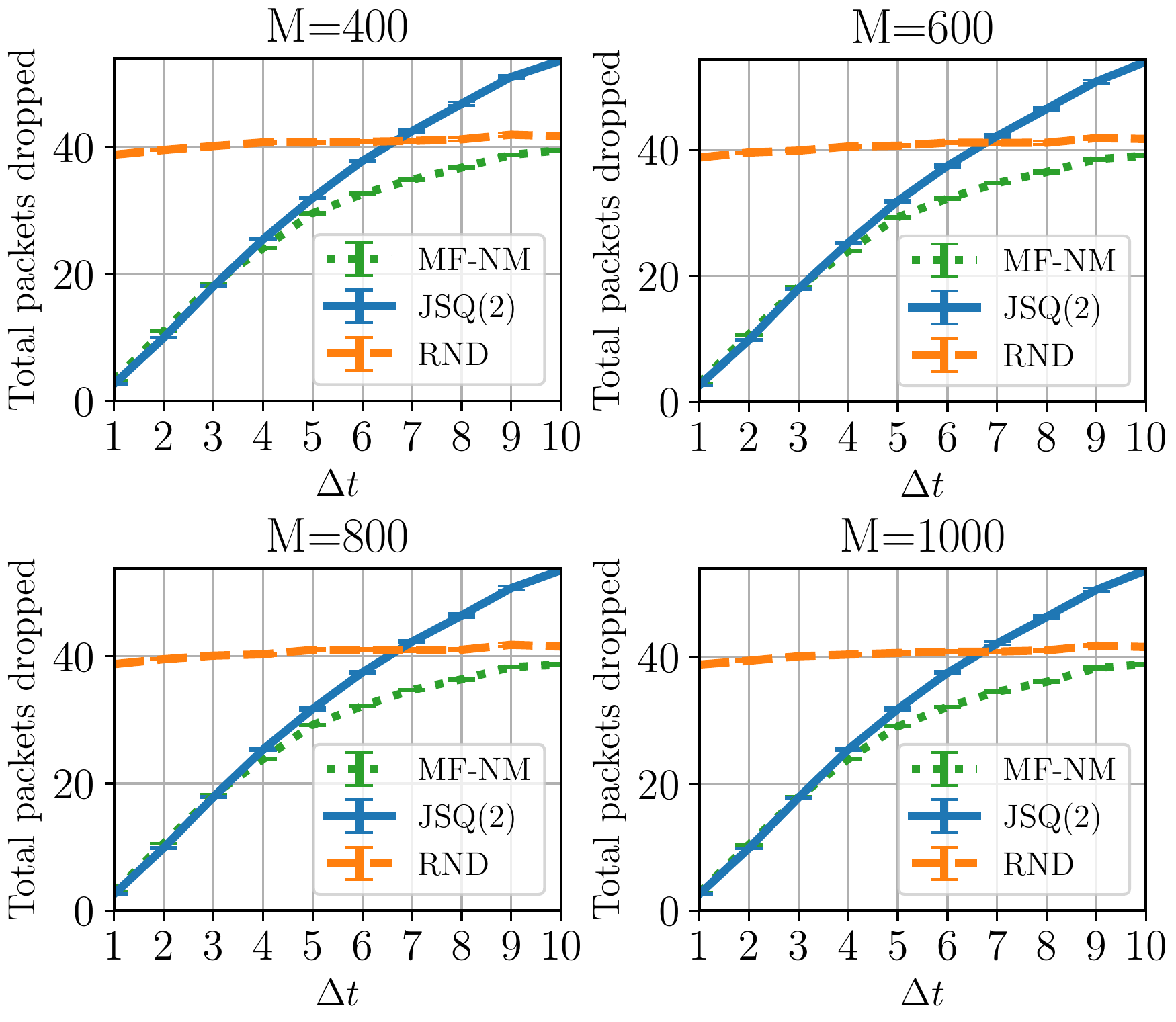}
    \caption{Comparison of the estimated expected packet drops of MF, JSQ($2$), RND policies together with $95\%$ confidence intervals for different configurations of $M$ and $N = M^2$. We keep the total running time of each setting approximately equal to $500$ time units to compare the effect of $\Delta t$. It can be observed that as $\Delta t$ rises, the achievable performance by choosing emptier queues degrades.}
 \label{fig:equal_time}
\end{figure}


The returns for the policies at each $\Delta t$, for the case where all experiments are run for approximately equal overall time instead of an equal number of decision epochs, are given in Figure~\ref{fig:equal_time}.
Here, we have trained a separate MF policy for each of the $\Delta t$ and compared to JSQ($2$) and RND. It can be seen that -- as expected due to fewer updates -- the overall achievable performance in the system worsens as the synchronization delay $\Delta t$ of the system increases. It can be seen that MF achieves better performance than JSQ($2$) starting from $\Delta t > 2$, while it outperforms RND in all cases. This stems from the fact that reinforcement learning only finds approximately optimal solutions. Nonetheless, at an intermediate level of synchronization delay beginning with $\Delta t = 3$, our learning-based methodology appears to be able to find a better policy than the optimal policies for $\Delta t \to 0$ (JSQ($2$)) and $\Delta t \to \infty$ (RND). Even for small $\Delta t=1$, our MF policy has comparable performance to the optimal JSQ($2$) policy, as long as $N,M$ are sufficiently large. As $\Delta t$ keeps increasing, MF and RND are therefore expected to perform equally good in sufficiently large systems as long as we indeed have $N \gg M$.


\begin{figure}
\center
\includegraphics[width=0.95\linewidth]{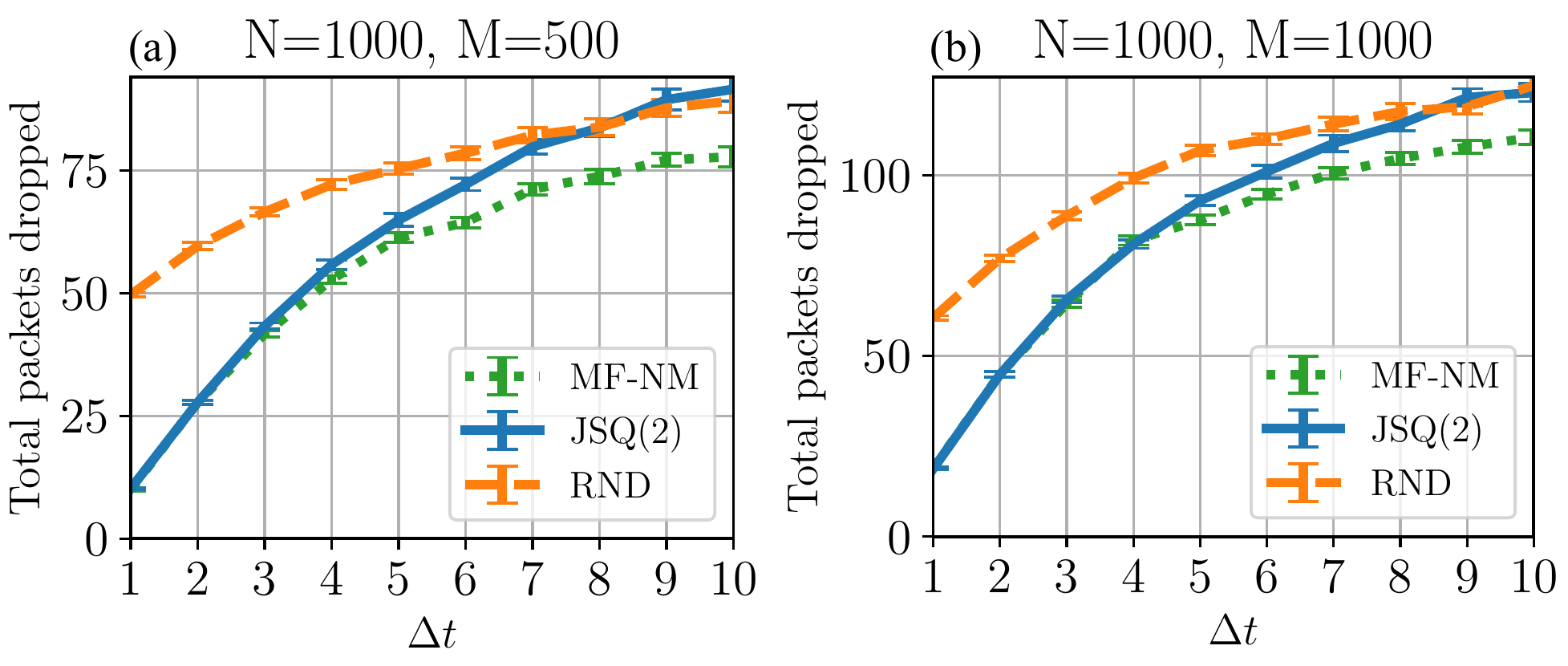}
    \caption{Comparison of the estimated expected packet drops of MF, JSQ($2$), RND policies together with $95\%$ confidence intervals for the same setting as in Figure~\ref{fig:equal_time}, equal total running time, for the case when  $M=1000$, $N=\frac{M}{2}$ and t$N=M$. As $\Delta t$ increases, the performance of our MF policy performs better than the other policies, even when  $N \not\gg M$.}
 \label{fig:equal_time_ablation}
\end{figure}


Finally, we perform experiments for $N  \not\gg M$, i.e. we violate the formal approximation assumption used to obtain our mean-field system. Even though the assumptions made in our approximation are violated, our policy nonetheless obtains good comparative performance. As shown in Figure~\ref{fig:equal_time_ablation}, we find that the qualitative performance differences remain the same for around $1000$ agents and queues. It can also be observed that the random policy no longer obtains approximately equal performance over $\Delta t$, which is caused by the fact that the queues are increasingly sampled unequally often by an agent, and resampling resolves the resulting increased focus on a subset of queues.

\section{Discussion}

In this work, we have proposed a mean-field-control-style formulation, with enlarged state-action space, for large-scale distributed queuing systems with synchronization delays. We have achieved this by formulating the finite-agent finite-queue system and considering $N \to \infty$, $M \to \infty$.  

Firstly, we provide theoretical performance guarantees which show that the performance in the $N,M$ system becomes arbitrarily close to the performance in the MFC system as long as $N,M$ are large enough.
Then, assuming a synchronous system with exact discretization of the underlying processes, we end up with an exactly discretized discrete-time Markov decision process on which we have applied reinforcement learning algorithms. As a result, we find that our learned solution can outperform the delay-free-optimal JSQ($d$) policy as well as the infinite-delay-optimal random policy in the regime of intermediate delays $\Delta t$, even if $N \centernot \gg M$ as long as the system size $N,M$ is sufficiently large.

An interesting future direction could be further extensions to the model such as non-exponential inter-arrival and service times, partial observability as well as explicitly modelling the case where $N$ is not significantly larger than $M$. To allow for better scaling of the reinforcement learning algorithm to very large queue sizes, it may be of interest to apply further limiting, real-valued approximations of the queue states as $B \gg 1$. One straightforward extension would be to used heterogenous service rates. Finally, an implementation of the developed methods in a real world system may be of interest. We hope that our work inspires further work at the intersection of mean-field control theory and distributed queuing systems.

\begin{acks}
This work has been co-funded by the German Research Foundation (DFG) as part of sub-project C3 within the Collaborative Research Center (CRC) 1053 – MAKI and the LOEWE initiative (Hesse, Germany) within the emergenCITY center.
\end{acks}


\bibliographystyle{ACM-Reference-Format}
\bibliography{references}

\end{document}